\documentclass[twoside,12pt,reqno]{amsart}
\usepackage{amsmath,amsthm, amsfonts,amssymb}
\usepackage{color,soul}
\usepackage[dvipsnames]{xcolor}
\usepackage[all]{xy}
\usepackage{graphicx}
\usepackage{indentfirst}
\usepackage{bm}
\usepackage{mathrsfs}
\usepackage{latexsym}
\usepackage{hyperref}
\usepackage{microtype}	
\usepackage{bookmark}
\usepackage{tikz}[2010/10/13]
%
%
%


\usepackage{xy}
\xyoption{matrix}
\xyoption{frame}
\xyoption{arrow}
\xyoption{arc}

\usepackage{ifpdf}
\ifpdf
\else
\PackageWarningNoLine{Qcircuit}{Qcircuit is loading in Postscript mode.  The Xy-pic options ps and dvips will be loaded.  If you wish to use other Postscript drivers for Xy-pic, you must modify the code in Qcircuit.tex}
\xyoption{ps}
\xyoption{dvips}
\fi

\entrymodifiers={!C\entrybox}

\newcommand{\bra}[1]{{\left\langle{#1}\right\vert}}
\newcommand{\ket}[1]{{\left\vert{#1}\right\rangle}}
\newcommand{\qw}[1][-1]{\ar @{-} [0,#1]}
\newcommand{\qwx}[1][-1]{\ar @{-} [#1,0]}
\newcommand{\cw}[1][-1]{\ar @{=} [0,#1]}
\newcommand{\cwx}[1][-1]{\ar @{=} [#1,0]}
\newcommand{\gate}[1]{*+<.6em>{#1} \POS ="i","i"+UR;"i"+UL **\dir{-};"i"+DL **\dir{-};"i"+DR **\dir{-};"i"+UR **\dir{-},"i" \qw}
\newcommand{\meter}{*=<1.8em,1.4em>{\xy ="j","j"-<.778em,.322em>;{"j"+<.778em,-.322em> \ellipse ur,_{}},"j"-<0em,.4em>;p+<.5em,.9em> **\dir{-},"j"+<2.2em,2.2em>*{},"j"-<2.2em,2.2em>*{} \endxy} \POS ="i","i"+UR;"i"+UL **\dir{-};"i"+DL **\dir{-};"i"+DR **\dir{-};"i"+UR **\dir{-},"i" \qw}





\newcommand{\control}{*!<0em,.025em>-=-<.2em>{\bullet}}





\newcommand{\multigate}[2]{*+<1em,.9em>{\hphantom{#2}} \POS [0,0]="i",[0,0].[#1,0]="e",!C *{#2},"e"+UR;"e"+UL **\dir{-};"e"+DL **\dir{-};"e"+DR **\dir{-};"e"+UR **\dir{-},"i" \qw}
\newcommand{\ghost}[1]{*+<1em,.9em>{\hphantom{#1}} \qw}



\newcommand{\lstick}[1]{*!R!<.5em,0em>=<0em>{#1}}
\newcommand{\ustick}[1]{*!D!<0em,-.5em>=<0em>{#1}}
\newcommand{\dstick}[1]{*!U!<0em,.5em>=<0em>{#1}}
\newcommand{\Qcircuit}{\xymatrix @*=<0em>}



\usepackage{mathtools}
\usepackage{breakurl}
\usepackage{ifpdf}
\usepackage{doi}

\newtheorem{Thm}{Theorem}[section]

\newtheorem{proposition}[Thm]{\bf Proposition}

\newtheorem{definition}[Thm]{\bf Definition}

\newtheorem{theorem}[Thm]{\bf Theorem}

\newcommand{\abs}[1]{\left\vert #1 \right\vert}

\newcommand{\be}{\begin{equation}}
\newcommand{\ee}{\end{equation}}
\newcommand{\blue}[1]{{\color{blue}#1}}

\newcommand{\beq}{\begin{eqnarray}}
\newcommand{\eeq}{\end{eqnarray}}

\newcommand{\Z}{\mathbb{Z}}
\newcommand{\beqs}{\begin{eqnarray*}}
\newcommand{\eeqs}{\end{eqnarray*}}
\newcommand{\Max}{\ket{\rm{Max}}}
\newcommand{\GHZ}{\ket{\rm{GHZ}}}

\newcommand{\FS}{\mathfrak{F}_{s}}

\newcommand{\size}[1]{\fontsize{10pt}{\baselineskip}\selectfont{#1}}
\renewcommand{\leq}{\leqslant}

\newcommand{\fmeasure}[5]{
\node at (#1-#3,#2-#4) {\size{$#5$}};
\draw (#1,#2) --(#1,#2-#4)  arc (0:180:#3) -- (#1-#3-#3,#2);
}

\newcommand{\fqudit}[5]{
\node at (#1-#3,#2+#4) {\size{$#5$}};
\draw (#1,#2) --(#1,#2+#4)  arc (0:-180:#3) -- (#1-#3-#3,#2);
}

\newcommand{\fdoublequdit}[6]
{
\fqudit{#1}{#2}{3*#3}{#4}{}
\fqudit{#1-2*#3}{#2}{#3}{#4}{#5}
\node at (#1-3*#3, #2+#4+2*#3) {\size{$#6$}};
}


\newcommand{\fbraid}[4]{
\draw (#1,#2)--(#3,#4);
\draw (#1,#4)--(2/3*#1+1/3*#3,2/3*#4+1/3*#2);
\draw (#3,#2)--(2/3*#3+1/3*#1,2/3*#2+1/3*#4);
}

\textwidth 6.5 truein  \hoffset -.75truein
\title[Constructive Simulation and Topological Design]{
 Constructive Simulation and \\Topological Design of Protocols}
\author{Arthur Jaffe}
\address{Harvard University, Cambridge, MA 02138, USA \\ and Max Planck Institute for Mathematics, Bonn, Germany}
\email{arthur\_jaffe@harvard.edu}
\author{Zhengwei Liu}
\address{Harvard University, Cambridge, MA 02138, USA}
\email{zhengweiliu@fas.harvard.edu}
\author{Alex Wozniakowski}
\address{Harvard University, Cambridge, MA 02138, USA}
\email{airwozz@gmail.com}

\begin{document}
\begin{abstract}
We give a topological simulation for tensor networks that we call the {\it two-string model}.  In this approach we give a new way to design protocols, and we discover a new multipartite quantum communication protocol.   We  introduce the notion of topologically-compressed transformations.   Our new protocol can implement multiple, non-local compressed transformations among multi-parties using one multipartite resource state. 
\end{abstract}
\maketitle

\section{Constructive Simulation and Topological Design}
By  \textit{constructive simulation} we mean the development of a  pictorial language for quantum information that yields intuition and insight, as well as understanding.  Just as the choice of language can  determine style or tone in writing, the choice of a mathematical language can influence one's pattern of thinking. Different languages convey different ideas and insights related to the same content.   A good language can suggest the discovery of new relations and aid the invention of new concepts.  
  
Manin and Feynman introduced the concept of quantum simulation \cite{Manin-book,Feynman,Manin},
and here we explore the simulation of quantum processes from the point of view of using a  diagrammatic language.
We are not the first to study topological methods; early landmark papers using topological methods  in quantum information  include 
 \cite{Kitaev03,Freedman-etal,LW}.  
The categorical approach  has been studied extensively in quantum information and tensor networks  by many persons \cite{DVC,AbramskyCoecke04,Coecke-Duncan-2,Coeckebook,BB,DBJC,LimitationSanders,Backens}. People also studied  these subjects  from a  planar algebra point of view~\cite{Vicary,Vicary-Reutter}.  We believe that our topological simulation,  presented in this paper, is significantly different from past attempts and will prove to be useful.

Pictures have for a long time complemented algebra as a way to provide  guidance. 
We focus on communication which is intrinsic to quantum networks: this is the task of propagating information from one place in the network to another.
It is reasonable to think that topological simulation based on isotopy is sufficient. In fact, communication seems especially suited for topological design, as quantum communication protocols can be expressed in purely topological form. We showed in ~\cite{JLW} that one can recover fundamental concepts in quantum information in this way.  In this paper we show how to define new concepts and to use topological simulation to design new protocols.

In \S\ref{Sect:TopologicalCompression}--\S\ref{Sect:CompressionDef} we explain the concept of topological simulation in detail, and we define \textit{topologically-compressed transformations}---a category of transformations that includes all controlled transformations.  

Then in \S\ref{Sect:MCT} we use topological {\it para-isotopy} to design a new diagrammatic protocol that we call {\it multipartite compressed teleportation} (MCT).  We apply this protocol to implement multiple non-local compressed transformations among multi parties---using one entangled state as a resource state,  local transformations, and classical communication~(LOCC).
We  show how one can represent MCT in terms of the usual algebraic elements that one employs in circuit design.   This protocol improves the efficiency of teleportation, compared with two-party communication, by a factor of two.

The concepts of constructive simulation and topological design are model independent.  In this paper we study what we call the \textit{two-string model}.  In this model we can simulate the Pauli matrices, measurement, and the resource state in a topological way.  In many communication protocols, the measurement-based recovery map is given by Pauli matrices. 
The resource state is the Bell state, or the GHZ state \cite{GHZ}.
Our model provides a topological explanation of this fact.  

It is interesting to find if other protocols, such as factoring \cite{Shor} or secure sharing \cite{Lance-etal}, require using other  elements of simulation in addition to topology.

\section{Fundamental Diagrams in the Two-String Language}
Our \textit{two-string language} acquires its name from the fact that we represent transformations of $1$-qudits by  diagrams with two input points and two output points.  We obtain fundamental diagrams for resource states, measurements,   the Pauli matrices $X,Y,Z$,
 and the string Fourier transform $\mathfrak{F}_{S}$. 
 Our strings are charged, with a label indicating the charge $k\in\Z_{d}$; this means that we consider charges modulo $d$.

The reader may wish to read complete details about  the two-string language that we present in~\cite{JL,JLW}.  However, in order to make this paper self-contained, we explain in this section  those aspects of the language that we require in this paper---without repeating the detailed proofs.

\subsection{Qudits and Transformations}
Let $d$ be the dimension of the single qudit space. We represent qudits by charged strings in the shape of a cap.   We generally omit the label for any charge $k_{j}=0$.  
We place our strings in the plane.  

Our convention is to place the charge on the left side of a vertical string.  
Isotopies that reverse this placement  are not allowed. 
However the string-Fourier relation  allows one to move a charge label over a cap or under a cup, see \eqref{Equ:SF1}.  One can use this relation to enable isotopies that would otherwise move a label  across a string from one side to the other.  

We represent the  $n$-qudit basis $\vec{\ket{k}}=\ket{k_1,k_2,\cdots, k_n}$ by $n$ charged caps, and these have $2n$ output points:
\be
\vec{\ket{k}}
=\frac{1}{d^{n/4}}\ \raisebox{-.5cm}{\tikz{
\fqudit{0/-3}{0/3}{1/-3}{2/3}{k_2}
\fqudit{4/-3}{0/3}{1/-3}{3/3}{k_1}
\fqudit{-6/-3}{0/3}{1/-3}{1/3}{k_n}
\node at (-4/-3,1/3) {$\cdots$};
}}\quad.
\ee
By convention, we place the label on the righthand string in each cap.

One denotes the adjoint by a charge-inverting vertical reflection, so the $n$-qudit matrix units $\vec{\ket{k}}\vec{\bra{\ell}}=\ket{k_1,k_2,\cdots, k_n}\bra{\ell_1,\ell_2,\cdots, \ell_n}$ are represented by:
\be
\vec{\ket{k}}\vec{\bra{\ell}}
=\frac{1}{d^{n/2}}
\raisebox{-1.2cm}{
\tikz{
\fqudit{0/-3}{0/3}{1/-3}{2/3}{k_2}
\fqudit{4/-3}{0/3}{1/-3}{3/3}{k_1}
\fqudit{-6/-3}{0/3}{1/-3}{1/3}{k_n}
\node at (-4/-3,7/3) {$\cdots$};
\fmeasure{0/-3}{9/3}{1/-3}{2/3}{-\ell_2}
\fmeasure{4/-3}{9/3}{1/-3}{3/3}{-\ell_1}
\fmeasure{-6/-3}{9/3}{1/-3}{1/3}{-\ell_n}
\node at (-4/-3,1/3) {$\cdots$};
}}\quad.
\ee
Therefore any $n$-qudit transformation $T$ is a diagram with $2n$ input points on the top and $2n$ output points on the bottom,
\be
T=\underbrace{\raisebox{-.4cm}{
\tikz{
\draw (-1/6,1/3) rectangle (1/6+2/3,1-1/3);
\draw (0,0)--(0,1/3);
\draw (2/3,0)--(2/3,1/3);
\draw (0,1)--(0,1-1/3);
\draw (2/3,1)--(2/3,1-1/3);
\node at (1/3,1/2) {$T$};
\node at (1/3,1/6) {$\cdots$};
\node at (1/3,1-1/6) {$\cdots$};
}}
}_{2n} \;.
\ee

\subsection{Planar relations}\label{Sec:planar relation}
In this section we give relations between certain diagrams; the consistency of these relations is proved in \cite{JL}. These relations provide a dictionary that relates qudits, transformations, measurements, and diagrams. It is crucial that any diagram with $2n$ input points and $2n$ out points represents an $n$-qudit transformation.

Recall that $d$ is the dimension of the $1$-qudit space. Let $q=e^{\frac{2\pi i}{d}}$, and  $\zeta=q^{1/2}$ be a square root of $q$ satisfying  $\zeta^{d^{2}}=1$.

\subsubsection{Multiplication yields additive charge of order $d$}
\be\label{AddCharge}
\raisebox{-.5cm}{
\begin{tikzpicture}
\draw (0,0) --(0,1);
\node (0,0) at (-0.2,0.8) {$\ell$};
\node (0,0) at (-0.2,0.2) {$k$};
\node (0.45,0.25) at (0.45,0.5) {$=$};
\node (0.9,0.35) at (1.4,0.5) {$k + \ell$};
\draw (2,0) --(2,1);
\node (1.4,-0.05) at (2.3,0.3) {,};
\end{tikzpicture} }\qquad
\raisebox{-.5cm}{
\tikz{
\node (3.7,0.35) at (4.8,0.5) {$d$};
\draw (5,0) --(5,1);
\node (4.4,0.25) at (5.4,0.5) {$=$};
\draw (5.8,0) --(5.8,1);
\node (4.95,0) at (6.1,0.3) {.};
}}
\ee

\subsubsection{Para-isotopy for exchange of charge order}
\be \label{Equ:para isotopy}
\raisebox{-.5cm}{
\begin{tikzpicture}
\draw (0,0) --(0,01);
\draw (0.2,0) --(0.2,01);
\draw [fill] (0.4,0.5) circle [radius=0.01];
\draw [fill] (0.5,0.5) circle [radius=0.01];
\draw [fill] (0.6,0.5) circle [radius=0.01];
\node (0,0) at (-0.15,0.2) {$k$};
\draw (0.8,0) --(0.8,1);
\draw (1.15,0) --(1.15,1);
\node (1.2,0) at (1,0.8) {$\ell$};
\end{tikzpicture}
}
=q^{k\ell}
\raisebox{-.4cm}{
\begin{tikzpicture}
\draw (0,0) --(0,01);
\draw (0.2,0) --(0.2,01);
\draw [fill] (0.4,0.5) circle [radius=0.01];
\draw [fill] (0.5,0.5) circle [radius=0.01];
\draw [fill] (0.6,0.5) circle [radius=0.01];
\node (0,0) at (-0.15,0.8) {$k$};
\draw (0.8,0) --(0.8,1);
\draw (1.15,0) --(1.15,1);
\node (1.2,0) at (1,0.2) {$\ell$};
\end{tikzpicture}}\;.
\ee
Here we assume that the strings between charge-$k$ string  and charge-$\ell$ string are neutral. 

\subsubsection{Twisted tensor product}
The twisted tensor product interpolates between the two vertical orders of the product.  In the twisted product, we write the labels at the same vertical height:
\beq\label{TwistedProduct}
\raisebox{-.4cm}{
\begin{tikzpicture}
\draw (0,0) --(0,01);
\draw (0.2,0) --(0.2,01);
\draw [fill] (0.4,0.5) circle [radius=0.01];
\draw [fill] (0.5,0.5) circle [radius=0.01];
\draw [fill] (0.6,0.5) circle [radius=0.01];
\node (0,0) at (-0.15,0.5) {$\scriptstyle{k}$};
\draw (0.8,0) --(0.8,1);
\draw (1.15,0) --(1.15,1);
\node (1.2,0) at (1,0.5) {$\scriptstyle{\ell}$};
\end{tikzpicture}}
&\equiv&\zeta^{-k\ell}
\raisebox{-.4cm}{
\begin{tikzpicture}
\draw (0,0) --(0,01);
\draw (0.2,0) --(0.2,01);
\draw [fill] (0.4,0.5) circle [radius=0.01];
\draw [fill] (0.5,0.5) circle [radius=0.01];
\draw [fill] (0.6,0.5) circle [radius=0.01];
\node (0,0) at (-0.15,0.2) {$\scriptstyle{k}$};
\draw (0.8,0) --(0.8,1);
\draw (1.15,0) --(1.15,1);
\node (1.2,0) at (1,0.8) {$\scriptstyle{\ell}$};
\end{tikzpicture}}
\nonumber\\
&=&\zeta^{k\ell}
\raisebox{-.4cm}{
\begin{tikzpicture}
\draw (0,0) --(0,01);
\draw (0.2,0) --(0.2,01);
\draw [fill] (0.4,0.5) circle [radius=0.01];
\draw [fill] (0.5,0.5) circle [radius=0.01];
\draw [fill] (0.6,0.5) circle [radius=0.01];
\node (0,0) at (-0.15,0.8) {$\scriptstyle{k}$};
\draw (0.8,0) --(0.8,1);
\draw (1.15,0) --(1.15,1);
\node (1.2,0) at (1,0.2) {$\scriptstyle{\ell}$};
\end{tikzpicture}
}\;.
\eeq
In this case $k,l\in \Z$, and $k$ and $k+d$ yield  different diagrams. If the pair is neutral, namely $\ell=-k$, then the twisted tensor product is defined for $k\in \Z_d$. 

\subsubsection{The string Fourier relation, for moving charge across a cap or cup}
\label{Sect:String-Fourier}
\be \label{Equ:SF1}
   \raisebox{-0.2cm}{
   \tikz{
   \fqudit {0}{0}{1/-3}{1/3}{}
   \node at (-3/-3,1/3) {\size{$\hskip -2.5cm k$}};
   }}
    =\zeta^{k^2}~~
    \raisebox{-0.2cm}{\tikz{\fqudit {0}{0}{1/-3}{1/3}{k}}}
    \;,\qquad\text{and}\qquad
   \raisebox{-0.2cm}{
   \tikz{
   \fmeasure {0}{0}{1/-3}{1/3}{};
   \node at (-3/-3,-1/3) {\size{$\hskip -2.5cm k$}};
   }}
    =\zeta^{-k^2}
    \raisebox{-0.2cm}{\tikz{\fmeasure {0}{0}{1/-3}{1/3}{k}}}
    \;.
\ee

\subsubsection{Quantum dimension}

\be\label{QuantumDimension}
\raisebox{-.4cm}{
\begin{tikzpicture}
\draw (0.8,0) circle [radius=0.5];
\end{tikzpicture}}
= \sqrt{d}
\;.
\ee

\subsubsection{Neutrality}
\be\label{Neutrality}
\raisebox{-.4cm}{
\begin{tikzpicture}
\node (0,0) at (0.45,0.05) {$k$};
\draw (1.1,0) circle [radius=0.5];
\end{tikzpicture}}
=0\;,\quad \text{for } d\nmid k.
\ee

\subsubsection{Temperley-Lieb relation}
\be
\raisebox{-.7cm}{
\tikz{
\draw (0,0)--(0,1/3+1/6) arc (180:0:1/6) arc (180:360:1/6) -- (2/3,1);
\draw (2/3+0,0+1/2)--(2/3+0,1/3+1/6+1/2) arc (0:180:1/6) arc (0:-180:1/6) -- (2/3+-2/3,1+1/2);
}}\
=
\raisebox{-.7cm}{
\tikz{
\draw (0,0)--(0,1+1/2);
}}\;,\qquad
\raisebox{-.7cm}{
\tikz{
\draw (0,0)--(0,1/3+1/6) arc (0:180:1/6) arc (0:-180:1/6) -- (-2/3,1);
\draw (-2/3,1/2)--(-2/3,1/3+1/6+1/2) arc (180:0:1/6) arc (180:360:1/6) -- (0,1+1/2);
}}\
=
\raisebox{-.7cm}{
\tikz{
\draw (0,0)--(0,1+1/2);
}}\; .
\ee
Based on the Temperley-Lieb relation, a neutral string only depends on the end points:
\be
\raisebox{-.5cm}{
\tikz{
\draw (0,0)--(0,1/3+1/6) arc (180:0:1/6) arc (180:360:1/6) -- (2/3,1);
}}\
=
\raisebox{-.5cm}{
\tikz{
\draw (0,0)--(2/3,1);
}}\;,\qquad
\raisebox{-.5cm}{
\tikz{
\draw (0,0)--(0,1/3+1/6) arc (0:180:1/6) arc (0:-180:1/6) -- (-2/3,1);
}}\
=
\raisebox{-.5cm}{
\tikz{
\draw (0,0)--(-2/3,1);
}}\; .
\ee

\subsubsection{Resolution of the identity}
\begin{align}\label{Equ:Resolution of the identity}
\raisebox{-.5cm}{
\tikz{
\draw (0,0)--(0,1+1/6);
\draw (2/3,0)--(2/3,1+1/6);
}}
&= d^{-1/2}\sum_{k=0}^{d-1}
\raisebox{-.5cm}{
\tikz{
\fqudit {0}{-3.5/3}{1/-3}{.5/3}{k}
\fmeasure {0}{0}{1/-3}{.5/3}{-k}
}}\quad .
\end{align}

\subsubsection{Braid}
We show in Proposition 2.15 of \cite{JL}  that 
\be
\omega=\frac{1}{\sqrt{d}}\sum_{j=0}^{d-1}\zeta^{j^{2}} 
\ \text{satisfies} \ \abs{\omega}=1\;.
\ee
The positive braid is 
\beq\label{first-pos-braid}
\raisebox{-.3cm}{
\tikz{
\fbraid{0}{2/3}{2/-3}{0}
}}
&\ \equiv&
\frac{1}{\sqrt{\omega d}}\,
\sum_{k=0}^{d-1}
\raisebox{-.3cm}{
\tikz{
\draw (0/-3,0) --(0/-3,3/3);
\draw (2/-3,0) --(2/-3,3/3);
\node at (1/-3,1/3) {$-k$};
\node at (3/-3,2/3) {$k$};
}}\\
&=&
\frac{1}{\sqrt{\omega d}}\,
\sum_{k=0}^{d-1} \zeta^{k^{2}}\,
\raisebox{-.3cm}{
\tikz{
\draw (0/-3,0) --(0/-3,3/3);
\draw (2/-3,0) --(2/-3,3/3);
\node at (1/-3,1.5/3) {$-k$};
\node at (3/-3,1.5/3) {$k$};
}}\quad.\nonumber
\eeq

\subsubsection{Pauli matrices}
The Pauli matrices $X,Y,Z$ are
\be\label{Pauli 21}
I =
\raisebox{-.4cm}{
\tikz{
\draw (0,0)--(0,1);
\draw (2/3,0)--(2/3,1);
}}\;,\quad
X=
\raisebox{-.4cm}{
\tikz{
\draw (0,0)--(0,1);
\draw (2/3,0)--(2/3,1);
\node at (-1/3+2/3,1/2) {1};
}}\;,\quad
Y=
\raisebox{-.4cm}{
\tikz{
\draw (0,0)--(0,1);
\draw (2/3,0)--(2/3,1);
\node at (-1/3,1/2) {-1};
}}\;,\quad
Z=
\raisebox{-.4cm}{
\tikz{
\draw (0,0)--(0,1);
\draw (2/3,0)--(2/3,1);
\node at (-1/3,1/2) {1};
\node at (-1/3+2/3,1/2) {-1};
}}\;.
\ee

\subsubsection{Bell State}
Moreover the Bell state, as a two-qudit resource state shared by Alice and Bob, is $ d^{-1/2}$ times
\be\label{Bell-State}
\raisebox{-.5cm}{
\tikz{
\fdoublequdit{1/-3}{0}{.5/-3}{.5/3}{}{}
\draw [red,dashed] (-.5/-3,0/3)-- (-.5/-3,4/3);
\node at (-3/-3,3/3) {Alice};
\node at (2/-3,3/3) {Bob};
}} 
\ee
Here only the double caps represent the Bell state; the other labels are for explanation. The dashed, red line indicates that the two persons have distinct localizations. The double cap can pass the red line. This means that the corresponding state can be shared between Alice and Bob as an entangled resource state. We use a corresponding $n$-qudit resource state given in \eqref{Max-State} for designing our protocol.

\subsection{Algebraic relations for some 1-qudit transformations}
The algebraic definitions of the Pauli $X,Y,Z$ are 
	\be\label{XYZ-Defn-1}
	X\ket{k}=\ket{k+1}\;,\
	Y\ket{k}=\zeta^{1-2k}\ket{k-1}\;,\
	Z\ket{k}=q^{k}\ket{k}\;.	
	\ee
The quantum Fourier transform matrix $F$ and the Gaussian matrix $G$ are defined by 
	\be\label{F-and-G}
	F\ket{k}=\frac{1}{\sqrt{d}}\sum_{\ell=0}^{d-1} q^{k\ell} \ket{\ell}\;,
\quad
G\ket{k}=\zeta^{k^2}\ket{k}\;.
	\ee
These matrices are unitary, and they satisfy many interesting relations, including 
	\be
	X^{d} = Y^{d} = Z^{d} =F^{4}=G^{2d}=(FG)^{3}\omega^{-1}=I\;,
	\ee 
	\be\label{XY-qYX}
		XYX^{-1}Y^{-1}
		= YZY^{-1}Z^{-1}
		= ZXZ^{-1}X^{-1}
		=q \;,
	\ee
	\be\label{XYZ-Equivalence}
	XYZ=\zeta\;,\quad
	FXF^{-1}=Z\;,\quad
	GXG^{-1}=Y^{-1}\;.
	\ee

We have shown that these matrices generate the single qudit Clifford group in Proposition~9.1 of~\cite{JL}.
The representations of the 1-qubit Clifford group were studied in \cite{WF}. 
For the multi-qubit case, the representations of the Clifford group were studied in~\cite{HWW}, where one can find  further references about the applications in quantum information.  It would be interesting to generalize those results to the qudit case. 

\subsection{Multipartite Resource States}
Greenberger, Horne, and Zeilinger introduced the classic multipartite resource state that we denote $\GHZ$ in~\cite{GHZ}.  Experimental work on $\GHZ$ was achieved in \cite{Pan-3,Experimental,Pan-4}.  

We introduced our $n$-qudit resource state $\Max$ in \cite{JLW}.  This state generalizes the Bell state \eqref{Bell-State} and has the diagrammatic representation,  
	\be\label{Max-State}
	\Max = d^{-n/4}\,
\underbrace{
\raisebox{-.5cm}{
\tikz{
\fqudit{0/-3}{0/3}{1/-3}{1/3}{}{}
\fqudit{-4/-3}{0/3}{1/-3}{1/3}{}{}
\node at (-3/-3,1/3) {$\cdots$};
\fqudit{1/-3}{0/3}{4/-3}{1/3}{}{}
}}}_{2n}\;.
	\ee

The algebraic interpretation of this resource state is also interesting.  
Let $ \vec{\ket{k}}=\ket{k_{1},k_{2}, \ldots, k_{n}}$ denote an $n$-qudit state with charges $\vec{k}=(k_{1},\cdots,k_{n})$.  Also let 
 $|\vec{k}|=\sum_{j=1}^{n}k_{j}\in \Z_{d}$ denote the total charge.  We have shown that 
\be\label{GHZ}
\GHZ=d^{-\frac{1}{2}}\sum_{k=0}^{d-1}\ket{k,k,\cdots,k} \;,
\quad\text{while}\quad 
\Max=d^{\frac{1-n}{2}}\sum_{|\vec{k}|=0} \ket{k_{1},k_{2}, \ldots, k_{n}}\;.
\ee
In fact these two resource states are related by the local operation of the quantum Fourier transform, 
\be\label{GHZ-Max}
\GHZ=(F\otimes \cdots \otimes F)\Max\;,
\ee
where $F$ denotes the  $1$-qudit quantum Fourier transform defined in \eqref{F-and-G}. 

\section{Teleportation}
One could say that the modern theory of quantum communication networks began in 1993 with the teleportation protocol discovered by  Bennett, Brassard, Cr\'epeau, Jozsa, Peres, and Wootters    \cite{Bennett-etal}.  This  protocol  allows one to disassemble a quantum state located at Alice's location, and to reconstruct it at Bob's location.   In order for the reconstruction to succeed, Alice and Bob prearrange to share a specific entangled state, which is utilized as a resource for the protocol.  In addition, they share some purely classical information.

Preskill, Gottesman, and Chuang described the notion of quantum software for solving problems in quantum computation and quantum communication \cite{Preskill,GottesmanChuang}.  Recently, Pirandola and Braunstein cite teleportation as the  ``most promising mechanism for a future quantum internet'' \cite{PirandolaBraunstein}.
One can realize quantum networks through bidirectional quantum teleportation (BQST).
Experimental work on long-distance teleportation has been achieved \cite{Pan-ea,Zeilinger-ea,Valivarthi-ea}.  The \textit{Quantum Science Satellite} built by Pan and his coworkers provides an opportunity to  test teleportation at record-breaking distances~\cite{Pan-1,Pan-2}.

 \section{Topological Compression: Informal Discussion}\label{Sect:TopologicalCompression}
 A fundamental concept that we introduce in this paper is a \textit{topologically-compressed transformation}.   
Our notion of topological compression becomes transparent in terms of the two-string model for quantum information.  One can visualize compression of a transformation in terms of the diagrams that describe it. 
Topological compression is compatible with use of our  multipartite resource state $\Max$ illustrated in \eqref{Max-State}.

Basically, the information for a compressed transformation on a qudit is carried by one of two strings.   For transformations on a single qudit, the Pauli matrices $X,Y$ in the representation \eqref{Pauli 21} are compressed. But Pauli $Z$ is unitarily equivalent to Pauli $X$, so it too is compressed.  Let us explain this in terms of a more general  example.  

\color{black}
Suppose that Alice and Bob are at separate locations and want to implement a non-local, two-qudit transformation~$T$.  The topological simulation of that goal is given by the following diagram:
$$\tikz{
\node at (3,2.5) {Alice};
\node at (0,2.5) {Bob};
\foreach \x in {0,1,4,5}{
\draw (\x,0)--++(0,2);
}
\fill[white] (-0.5,.5) rectangle (5.5,1.5);
\draw[thick,dashed] (-0.5,.5) rectangle (5.5,1.5);
\node at (2,1) {$T$};
\draw[red,dashed] (1.5,-.5)--(1.5,3);
}\raisebox{1.3cm}{\quad  .}
$$
If one applies a topological isotopy, one can move the red line so the transformation is performed completely by Alice. This is the solution, and its topological simulation is 
\be\label{Isotopy-Teleport}
\raisebox{-1.3cm}{\tikz{
\node at (3,2.5) {Alice};
\node at (0,2.5) {Bob};
\foreach \x in {4,5}{
\draw (\x,0)--++(0,2);
}
\foreach \x in {0,1}{
\draw (\x,0)--++(2,.5)--++(0,1)--++(-2,.5);
}
\fill[white] (1.75,.5) rectangle (5.5,1.5);
\draw[thick,dashed] (1.75,.5) rectangle (5.5,1.5);
\node at (3.5,1) {$T$};
\draw[red,dashed] (1.5,-.5)--(1.5,3);
}}\quad  .
\ee
This topological isotopy does not change the function of the diagram, but does change its interpretation in quantum information. After isotopy, the diagram means that Bob  can teleport his input to Alice; then Alice can implement the transformation $T$ locally on her computer and teleport the result back to Bob using BQST.

It is well-known that the cost of teleportation for a general transformation is two resource states.   
Recall that one resource state allows two strings to pass across the red dashed line, as explained for the Bell state in \eqref{Bell-State}.  Thus one has an indication from \eqref{Isotopy-Teleport}  that the cost of teleportation can be estimated by counting the number of strings that pass over the red dashed line.  

However Zhou et al. and Eisert et al.
pointed out that the cost of BQST may not be optimal. For certain transformations, including CNOT, they gave a teleportation protocol with lower cost  \cite{Zhou-etal, Eisert-etal}.  This optimization has been further studied in \cite{Yu-etal}, and in \cite{Pirandola-etal,Yu} one finds extensive references.

So it is natural to ask the question: what transformations can be teleported with less cost, compared with BQST?   We now characterize  topologically-compressed transformations and show that they have this property.  All controlled transformations are topologically compressed.  

Following the above discussion, consider any 2-qudit transformation that can be represented by the following diagram: 
$$\tikz{
\foreach \x in {0,1,2,3}{
\draw (\x,0)--++(0,1);
}
\fill[white] (0.5,.25) rectangle (3.5,.75);
\draw[thick,dashed] (0.5,.25) rectangle (3.5,.75);
\node at (2,.5) {$T$};
}$$
In other words, such a transformation acting on Bob's qudit only requires the information on one of the two strings.  We say that such transformations are topologically compressed; we give the precise statement as Definition \ref{Def:Compress}. This covers a large variety of transformations that are commonly used in protocols.
In this case the corresponding isotopy yields: 
$$\tikz{
\node at (3,2.5) {Alice};
\node at (0,2.5) {Bob};
\foreach \x in {4,5}{
\draw (\x,0)--++(0,2);
}
\draw (0,0)--(0,2);
\foreach \x in {1}{
\draw (\x,0)--++(2,.5)--++(0,1)--++(-2,.5);
}
\fill[white] (1.75,.5) rectangle (5.5,1.5);
\draw[thick,dashed] (1.75,.5) rectangle (5.5,1.5);
\node at (3.5,1) {$T$};
\draw[red,dashed] (1.5,-.5)--(1.5,3);
}.
$$
As only two strings pass over the red line, one expects that it is possible to implement this non-local transformation using only one resource state, rather than two.

In  \S\ref{Sect:MCT}, we introduce a new protocol to teleport information that is captured on one string using one resource state. We call this protocol {\it compressed teleportation} (CT). It relies on using local operations and classical communication (LOCC).  Furthermore it optimizes the entanglement resource cost for teleportation of compressed transformations. The CT protocol reduces the costs by 50$\%$ compared with BQST.  In BQST, one needs two resource states. In our CT protocol, we use one resource state.
Since  CNOT, Tofolli, and controlled transformations are all topologically compressed, our protocol covers the previous teleportation protocols for CNOT, Tofolli, and controlled transformations.

We then generalize this protocol to multipartite communication in \S\ref{Sect:MCT-Control}.  
Our MCT protocol does not reduce to multiple, bipartite communications.  
If one realizes this teleportation by BQST, then one would need $n$ bipartite resource states, and constructing these requires $2n$ noiseless channels.
In our MCT protocol, we use one $n$-partite resource, which requires $n$ noiseless channels to construct.   Therefore we reduce the cost by 50\%.

BQST

\section{Topological Compression: Definitions\label{Sect:CompressionDef}}
If the diagrammatic representation of a 2-qudit transformation $T$ has a free through string\footnote{By ``through string,'' we mean a neutral string that passes from the $j^{\rm th}$ input to the  $j^{\rm th}$ output and that crosses no other string.} on the left, 
\be\label{CompT}
\raisebox{-.4cm}{
\tikz{
\foreach \x in {0,1,2,3}{
\draw (\x,0)--++(0,1);
}
\fill[white] (0.5,.25) rectangle (3.5,.75);
\draw[thick,dashed] (0.5,.25) rectangle (3.5,.75);
\node at (2,.5) {$T$};
}}\ .
\ee
then we consider such transformations as topologically  compressed on the first qudit.

\begin{proposition}\label{Prop:Z-compress}
For an $n$-qudit transformation $T$, the following conditions are equivalent:
\begin{itemize}
\item[(1)] The transformation $T$ is block diagonal on the first qudit.
\item[(2)]There are $(n-1)$-qudit transformations $T(\ell)$, $l\in \Z_{d}$, so that
	\be\label{Controlled Transformation}
	T=\sum\limits_{\ell=0}^{d-1} \ket{\ell}\bra{\ell}\otimes T(\ell)\;,
	\ee
	i.e., $T$ is a controlled transformation, where the first qudit is the controlled qudit.
\item[(3)] There are qudit transformations $T'(\ell)$, $l\in \Z_{d}$, so that
	\be
	T=\sum\limits_{\ell=0}^{d-1} Z^{\ell} \otimes T'(\ell)\;.
	\ee
\item[(4)] The transformation $T$ commutes with Pauli $Z$ on the first qudit.
\end{itemize}

\begin{proof}
Obviously $(1) \iff (2) \iff (3) \Rightarrow (4)$.
Since $Z$ has distinct eigenvalues on the diagonal, we have that $(4)\Rightarrow (2)$.
\end{proof}
\end{proposition}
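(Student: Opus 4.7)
The plan is to prove the cycle $(1)\Leftrightarrow(2)\Leftrightarrow(3)\Rightarrow(4)\Rightarrow(2)$, in which the first two equivalences are essentially bookkeeping and $(4)\Rightarrow(2)$ is the one substantive step. The author already notes this outline; I would fill it in as follows.

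First, I would observe that $(1)$ and $(2)$ are literally two phrasings of the same statement: block diagonality on the first qudit in the computational basis $\{\ket{\ell}\}_{\ell\in\Z_{d}}$ means that $T$ has no matrix entries between distinct $\ket{\ell}$-sectors of the first factor, which is exactly $T=\sum_{\ell}\ket{\ell}\bra{\ell}\otimes T(\ell)$ for uniquely determined $(n-1)$-qudit operators $T(\ell)$.

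For $(2)\Leftrightarrow(3)$, I would use the discrete Fourier duality between the rank-one projectors $\ket{\ell}\bra{\ell}$ and the powers $Z^{k}$. From \eqref{XYZ-Defn-1} we have $Z^{k}=\sum_{\ell}q^{k\ell}\ket{\ell}\bra{\ell}$, and the orthogonality of characters inverts this to $\ket{\ell}\bra{\ell}=\tfrac{1}{d}\sum_{k=0}^{d-1}q^{-k\ell}Z^{k}$. Substituting either identity into the form in $(2)$ gives the form in $(3)$, and conversely; the coefficient operators $T(\ell)$ and $T'(\ell)$ are related by the invertible $d\times d$ Fourier transform acting on the labeling index, so the two representations describe the same set of transformations.

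The implication $(3)\Rightarrow(4)$ is immediate, since each summand $Z^{\ell}\otimes T'(\ell)$ commutes with $Z\otimes I^{\otimes(n-1)}$.

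The substantive step is $(4)\Rightarrow(2)$, which I would handle by spectral decomposition: $Z$ has the $d$ distinct eigenvalues $q^{0},q^{1},\ldots,q^{d-1}$, with one-dimensional eigenspaces spanned by $\ket{\ell}$. Hence the eigenspaces of $Z\otimes I^{\otimes(n-1)}$ are the $d$ subspaces $\ket{\ell}\otimes(\mathbb{C}^{d})^{\otimes(n-1)}$, and any operator commuting with a diagonalizable operator preserves each of its eigenspaces. Restricting $T$ to the $\ell$-th eigenspace yields an $(n-1)$-qudit operator $T(\ell)$, and assembling these gives the form \eqref{Controlled Transformation}. The main (and only) obstacle is really minor—keeping the Fourier conventions for $Z^{k}$ versus $\ket{\ell}\bra{\ell}$ straight—since the non-degeneracy of the spectrum of $Z$ removes any genuine difficulty from the commutant step.
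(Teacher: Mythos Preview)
Your proposal is correct and follows exactly the same logical route as the paper: the chain $(1)\Leftrightarrow(2)\Leftrightarrow(3)\Rightarrow(4)$ is treated as routine, and $(4)\Rightarrow(2)$ is deduced from the fact that $Z$ has distinct eigenvalues. You simply supply the details that the paper leaves implicit, in particular the Fourier inversion linking $\ket{\ell}\bra{\ell}$ and $Z^{k}$ for $(2)\Leftrightarrow(3)$ and the eigenspace-preservation argument for $(4)\Rightarrow(2)$.
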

A transformation $T$ is called $Z$-compressed on the first qudit
if one of the above condition holds.

\begin{definition}  
In general, we say that a transformation $T$ is $Z$-compressed on the $j^{\rm th}$-qudit if $T$ commutes with the action of Pauli $Z$ on the $j^{\rm th}$-qudit. 
Similarly we say $T$ is $X$ (or $Y$)-compressed on the $j^{th}$ qudit, if  it commutes with the action of Pauli $X$ (or $Y$) on the $j^{th}$ qudit.
\end{definition}

We can switch between the three compressed transformations using
$FXF^{-1}=Z$ and $GXG^{-1}=Y^{-1}$.

\begin{theorem}
A transformation $T$ has the representation \eqref{CompT} if and only if it is $X$-compressed on the first qudit.
\end{theorem}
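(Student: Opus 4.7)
The plan is to reduce this equivalence to the algebraic criterion of Proposition~\ref{Prop:Z-compress}, using the Fourier-conjugation relation $F X F^{-1} = Z$ from \eqref{XYZ-Equivalence} as the bridge. Conjugating by $F \otimes I$ on the first qudit sends $X \otimes I$ to $Z \otimes I$, so $T$ is $X$-compressed on qudit~$1$ if and only if $(F \otimes I) T (F^{-1} \otimes I)$ is $Z$-compressed. Applying Proposition~\ref{Prop:Z-compress}(3) and then conjugating back with $F^{-1} Z F = X$ yields the algebraic characterization: $T$ is $X$-compressed on the first qudit if and only if $T = \sum_{\ell=0}^{d-1} X^\ell \otimes T'(\ell)$ for some $1$-qudit operators $T'(\ell)$.

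For the $(\Leftarrow)$ direction, I would simply translate the decomposition $T = \sum_\ell X^\ell \otimes T'(\ell)$ into a picture. By \eqref{Pauli 21}, each Pauli power $X^\ell$ is drawn as two parallel strings carrying a single charge $\ell$ on the right string of qudit~$1$, leaving the left string as a bare through string. Hence in every summand $X^\ell \otimes T'(\ell)$ the leftmost of the four strings is free, and collecting the finite sum over $\ell$ together with each $T'(\ell)$ into a single box spanning strings $2$, $3$, $4$ produces the representation \eqref{CompT}.

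For the $(\Rightarrow)$ direction, I would expand $T$ in the Pauli basis $\{X^a Z^b \otimes X^c Z^e\}$ of $2$-qudit operators and argue that a free left through string forces every coefficient with $b \neq 0$ to vanish. The crucial point is that, among the $1$-qudit Paulis, only $Z^b$ places a charge on the \emph{left} string of a qudit (see \eqref{Pauli 21}), whereas $X^a$ places its charge on the right string. Since Pauli products form a basis, the charge pattern on the left string of qudit~$1$ in the Pauli expansion cannot be cancelled by combining basis elements; so if $T$ admits a diagrammatic form with no charge or crossing on that string, the expansion can contain only $b = 0$ terms. This gives $T = \sum_{a,c,e} c_{a,c,e}\, X^a \otimes X^c Z^e = \sum_\ell X^\ell \otimes T'(\ell)$, which is $X$-compressed by the first step.

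The main obstacle is justifying rigorously the ``free left string kills every $b \neq 0$ Pauli component'' step. One clean route is a dimension count: the $X$-compressed subspace has dimension $d^3$ (spanned by $\{X^a \otimes X^c Z^e\}$ with $a,c,e\in\Z_d$), and by the $(\Leftarrow)$ direction it embeds into the subspace of $T$'s having a representation of the form \eqref{CompT}; checking that the latter subspace has at most $d^3$ degrees of freedom---by parametrising the $3$-string box as a basis of $d$ charges on string~$2$ tensored with a basis of $d^2$ $1$-qudit operators on qudit~$2$---forces equality. Alternatively, one may insert the resolution of the identity \eqref{Equ:Resolution of the identity} on qudit~$1$ just above the box in \eqref{CompT} and use the freeness of the leftmost string together with the neutrality relation \eqref{Neutrality} to collapse the resulting sum directly onto the $X^\ell$-indexed form.
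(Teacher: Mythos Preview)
Your reduction via $F$-conjugation and your $(\Leftarrow)$ argument match the paper's. One small correction: in the two-qudit setting $X^\ell\otimes I$ is not simply a charge $\ell$ on string~$2$; via Jordan--Wigner it carries charges $\ell,-\ell,\ell$ on strings $2,3,4$. This does not affect your conclusion, since string~$1$ remains free and everything else is absorbed into the box on strings $2,3,4$.

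For the $(\Rightarrow)$ direction you diverge from the paper, and the paper's route is considerably simpler. Rather than expanding $T$ in the Pauli basis and arguing that the $Z^b$ components must vanish---which, as you correctly flag, is delicate because linear combinations of Pauli words can simplify nontrivially under the diagrammatic relations---the paper observes that any diagram of the form~\eqref{CompT} lies in the algebra generated by the three elementary transformations ``charge $1$ on string $j$'' for $j=2,3,4$. Since the Jordan--Wigner form of $X\otimes I$ is also supported only on strings $2,3,4$, a direct para-isotopy check using~\eqref{Equ:para isotopy} shows that each of these three generators commutes with $X\otimes I$. Hence every $T$ of form~\eqref{CompT} commutes with $X\otimes I$, which is the $X$-analogue of condition~(4) in Proposition~\ref{Prop:Z-compress}, and $T$ is $X$-compressed. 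This bypasses your ``main obstacle'' entirely: there is no need to track how Pauli components manifest diagrammatically, nor to invoke a dimension count.

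Your dimension-count alternative would go through provided you can establish that the space of three-input, three-output diagrams has dimension exactly $d^3$; this is true in the parafermion planar para algebra of~\cite{JL} but is not proved within the present paper, so it is a heavier import than the one-line para-isotopy argument. Your Pauli-basis argument as stated (``the charge pattern on the left string cannot be cancelled'') is not rigorous: distinct Pauli words can combine to a diagram with a free left string via the relations of~\S\ref{Sec:planar relation}, so linear independence of the Paulis alone does not force $b=0$.
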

  
\begin{proof}
Applying the conjugation of $F$ on the first qudit to Proposition \ref{Prop:Z-compress}, we have the following equivalent conditions:
\begin{itemize}
\item[(1)] The transformation $T$ is $X$-compressed on the first qudit.\item[(2)] There are $(n-1)$-qudit transformations $T'(\ell)$, $l\in \Z_{d}$, so that
	\be
	T=\sum\limits_{\ell=0}^{d-1} X^{\ell} \otimes T'(\ell)\;.
	\ee
\item[(3)] The transformation $T$ commutes with Pauli $X$ on the first qudit.
\end{itemize}

The transformation on the second qudit $I\otimes T'(\ell)$ is represented by 
\be
\raisebox{-.4cm}{
\tikz{
\foreach \x in {0,1,2,3}{
\draw (\x,0)--++(0,1);
}
\fill[white] (1.5,.25) rectangle (3.5,.75);
\draw[thick,dashed] (1.5,.25) rectangle (3.5,.75);
\node at (2.5,.5) {$T'(\ell)$};
}}\ .
\ee
By Jordan-Wigner transformation, the $X\otimes I$ is represented by 
\be
\raisebox{-.4cm}{
\tikz{
\foreach \x in {0,1,2,3}{
\draw (\x,0)--++(0,1);
}
\node at (.5,.5) {$1$};
\node at (1.5,.5) {$-1$};
\node at (2.5,.5) {$1$};
}}\ .
\ee
Therefore if $T$ is $X$-compressed, then $T$ has the representation  \eqref{CompT} by condition (2).

On the other hand,  if $T$ has the representation  \eqref{CompT}, then it is algebraically generated by the three transfomations
\be
\raisebox{-.4cm}{
\tikz{
\foreach \x in {0,1,2,3}{
\draw (\x,0)--++(0,1);
}
\node at (.5,.5) {$1$};
}}\ 
\;,\qquad
\raisebox{-.4cm}{
\tikz{
\foreach \x in {0,1,2,3}{
\draw (\x,0)--++(0,1);
}
\node at (1.5,.5) {$1$};
}}
\;, \qquad
\raisebox{-.4cm}{
\tikz{
\foreach \x in {0,1,2,3}{
\draw (\x,0)--++(0,1);
}
\node at (2.5,.5) {$1$};
}}\;.
\ee
By para-isotopy, the three generators commutes with $X\otimes I$. So $T$ also commutes with $X\otimes I$. By condition (3), $T$ is $X$-compressed on the first qudit.
\end{proof}

The characterization and the proof also work for the $n$-qudit case.

\begin{definition}\label{Def:Compress}
The transformation $T'$ is compressed on the $j^{\rm th}$-qudit if $T'=UTV$, where $T$ is $Z$-compressed on the $j^{\rm th}$-qudit and  $U,V$ are local transformations on the $j^{\rm th}$-qudit.
\end{definition}

\section{The Mutipartite Compressed Teleportation (MCT) Protocol}\label{Sect:MCT}
\subsection{MCT for Controlled Transformations}\label{Sect:MCT-Control}
Suppose a network has one leader and $n$ parties. Also assume that the $j^{\rm th}$ party can perform a controlled transformation
\be
T_j=\sum_{l=0}^{d-1} \ket{\ell}\bra{\ell} \otimes T_{j}(\ell),
\ee
where the control qudit belongs to the person $P_j$ in the $j^{\rm th}$ party, and $T_j(\ell)$ can be an arbitrary multi-person, multi-qudit transformation on the targets. (The algebraic notation for  the controlled transformation $T_j$ is shown in Figure~\ref{Fig:Controlled T}.)
\begin{figure}[h]
\scalebox{0.8}{
\Qcircuit @C=1.5em @R=2em  {
& \gate{T_j(\ell)} \qwx[1] \qw & \qw\\
& \control \qw & \qw
}}
\caption{Controlled transformations. \label{Fig:Controlled T}}
\end{figure}

With these assumptions, we design a circuit shown on the left of Equation \eqref{MCT-Protocol}, 
where the resource state $\GHZ$ is represented by $d^{-\frac{1}{2}}\sum\limits_{k=0}^{d-1}\ket{k,k,\cdots,k}$.
We call it the Multipartite Compressed Teleportation protocol (MCT).

The function of the circuit is shown on the right of Equation \eqref{MCT-Protocol}.
The leader can perform any non-local controlled transformation $T_{j}$ to the $j^{\rm th}$ party  in the network, for $1\leqslant j\leqslant n$. 
The leader has the common control qudit, and the $j^{\rm th}$ party performs the transformation $T_{j}(\ell)$ for control qudit~$\ell$.

Surprisingly, one can implement these $n$ non-local transformations  using only $1$ resource state shared by the leader and the persons $P_j$.

\begin{theorem}
It costs  one $(n+1)$-partite resource state $\GHZ$ and $2n$ cdits (classical information channels) to implement controlled transformations $T_{j}$ shared by a Leader and the $j^{\rm th}$ party, for $1\leq j \leq n$. The time cost is the transmission of two cdits and the implementation of local transformations.
\end{theorem}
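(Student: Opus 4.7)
The plan is to exhibit a concrete circuit, verify directly that it implements the desired multi-party controlled transformation, and tally the resources. Since the theorem is a performance statement about a specific protocol, the argument is largely a state-vector bookkeeping exercise.

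Write the leader's control qudit as $\ket{\phi}_{C} = \sum_{\ell} c_{\ell} \ket{\ell}$, the resource as $\GHZ = d^{-1/2} \sum_{k} \ket{k,k,\ldots,k}_{L P_{1} \cdots P_{n}}$ with share $L$ at the leader and share $P_{j}$ at the $j^{\rm th}$ party, and assume each $P_{j}$ holds a target register $\ket{t_{j}}$. The first step of the proposed protocol is a local generalized $\text{CNOT}$ at the leader sending $\ket{\ell}_{C} \ket{k}_{L} \mapsto \ket{\ell}_{C} \ket{k-\ell}_{L}$, followed by a computational-basis measurement of $L$ with outcome $s$. A short calculation using \eqref{XYZ-Defn-1} gives the post-measurement state $\sum_{\ell} c_{\ell} \ket{\ell}_{C} \otimes \ket{s+\ell, \ldots, s+\ell}_{P_{1} \cdots P_{n}} \otimes \ket{\vec{t}}$. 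The leader broadcasts $s$ in parallel to the $n$ parties (one round of classical communication, $n$ cdits), each party shifts its share by $X^{-s}$ to produce $\ket{\ell, \ldots, \ell}$, and then each $P_{j}$ applies its local controlled transformation, yielding
\[
\sum_{\ell} c_{\ell} \ket{\ell}_{C} \otimes \ket{\ell, \ldots, \ell}_{P_{1} \cdots P_{n}} \otimes T_{1}(\ell) \cdots T_{n}(\ell) \ket{\vec{t}}.
\]

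The remaining step disentangles the parties' shares from the control. Using \eqref{F-and-G}, each $P_{j}$ applies $F$ and measures in the computational basis, obtaining outcome $m_{j}$ and accumulating the weight $d^{-1/2} q^{\ell m_{j}}$. Each party returns $m_{j}$ to the leader in parallel (a second round, $n$ cdits), and the leader applies $Z^{-M}$ with $M = \sum_{j} m_{j}$ to cancel the accumulated phase $q^{\ell M}$ via the eigenvalue relation for $Z$ in \eqref{XYZ-Defn-1}. Up to normalization, the final state is $\sum_{\ell} c_{\ell} \ket{\ell}_{C} \otimes T_{1}(\ell) \cdots T_{n}(\ell) \ket{\vec{t}}$, which is exactly $T_{1} T_{2} \cdots T_{n}$ acting with the leader's original control qudit preserved at the leader.

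The resource tally is then immediate: one $(n+1)$-partite $\GHZ$, and two rounds of parallel classical communication carrying $n$ cdits each, for a total of $2n$ cdits. Because the two rounds flow in opposite directions (leader-to-parties, then parties-to-leader), they cannot be collapsed, but within each round every message is a single cdit, so the latency is two cdits plus the local operations, as claimed. The only subtlety that needs care is the bookkeeping of the shift $s$ and the Fourier-outcome sum $M$: one must check that the post-$\text{CNOT}$ shift on the parties' side and the final $Z^{-M}$ on the leader's side combine to give exactly the identity on the control register. This is a direct application of \eqref{XYZ-Defn-1}--\eqref{XYZ-Equivalence}, and no conceptual difficulty arises beyond careful indexing.
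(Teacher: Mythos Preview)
Your proof is correct: the circuit you write down works, the state-vector bookkeeping is clean, and the resource and time tallies are right. The protocol you exhibit is, up to local basis changes on the leader's ancilla and the parties' shares, the same circuit the paper displays in \eqref{MCT-Protocol} (the paper uses $F^{-1}$--controlled-$Z$--$F^{-1}$--measure on the leader's share and $F^{-1}$--measure on the parties' shares, whereas you use a CNOT--measure and $F$--measure; these differ only by local Fourier conjugations).

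Where your route genuinely diverges from the paper is in \emph{how} the protocol is obtained and justified. You posit the circuit and verify it algebraically. The paper instead derives it by topological simulation in the two-string language: it draws the non-local goal as a diagram, applies planar isotopy to push each $T_{j}$ into its own party's region and the non-locality into a single cap configuration at the top, recognizes that cap configuration as the resource state $\Max$ (equivalently $\GHZ$ via \eqref{GHZ-Max}), and reads off the Pauli recovery maps from the charges needed to balance the measurement cups. Your approach is shorter and entirely self-contained, needing none of the diagrammatic machinery. The paper's approach is longer but constructive: it explains \emph{why} the recovery maps are Paulis and \emph{why} $\Max$ is the natural resource, and it offers a template for designing new protocols rather than merely checking a given one.
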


\vskip -2cm
\begin{align}\label{MCT-Protocol}
d^{-\frac{1}{2}}\sum_{k\in \Z_{d}}\quad\quad&\quad\quad\quad
\scalebox{0.8}{\raisebox{3cm}{
\Qcircuit @C=1.5em @R=2em  {
\lstick{\text{Party 1}}&\qw&\qw&\qw& \qw&\multigate{0}{T_{1}(\ell)}&\qw&\qw&\qw \\
\lstick{k}& \dstick{ \Large{\vdots}} \qw & \qw & \qw & \gate{X} \cwx[2] & \control \qw \qwx[-1]  & \gate{F^{-1}}& \meter \cwx[2]\\
\lstick{\text{Party n}}&\qw&\qw&\qw&\qw&\gate{T_{n}(\ell)}&\qw&\qw&\qw\\
\lstick{k}& \qw & \qw & \qw & \gate{X} \cwx[1] & \control \qw \qwx[-1] & \gate{F^{-1}} & \meter \cwx[2]\\
\lstick{k}&\gate{F^{-1}} & \control \qw \qwx[1] & \gate{F^{-1}}    & \meter \cwx[-1]  \\
\lstick{\text{Leader}}& \qw & \gate{Z} & \qw   & \qw & \qw & \qw & \gate{Z} & \qw\\
}}}\nonumber\\
&\nonumber\\
=\quad\quad&\quad\quad\quad
\scalebox{0.8}{\raisebox{1cm}{
\Qcircuit @C=1.5em @R=2em  {
\lstick{\text{Party 1}}& \multigate{0}{T_{1}(\ell)}&\qw &\qw &\qw\\
\lstick{\text{Party n}}& \qw& \ustick{ \Large{\vdots}} \qw&\multigate{0}{T_{n}(\ell)}&\qw \\
\lstick{\text{Leader}}& \control \qw \qwx[-2] & \qw& \control \qw \qwx[-1] & \qw
}}}
\end{align}

We specify the MCT in the usual algebraic terminology as a circuit. One can consider CT as a special case for two parties. 
From this picture, one can understand the protocol without knowing its topological significance.  In \S \ref{Sec:Topo MTC} we derive this protocol from topological simulation.

\section{Topological simulation for MCT}\label{Sec:Topo MTC}
We give the MCT diagrammatic protocol for $X$-compressed transformations. The design of this protocol is equivalent to the design for $Z$-compressed transformations by applying unitary conjugation. 
 
We summarize the use of topological simulation to design the protocol in Equation~\eqref{CTX-Protocol-Pic}.  Let us call the left-hand side of the identity ``Picture 1,''  the middle term in the identity ``Picture 2,'' and the right-hand side of the identity ``Picture 3.''  Picture 1 represents the simulation of the goal, 
where the Leader desires to share an $X$-compressed transformation $T_{j}$ with the $j^{\rm th}$ party, for $1\leqslant j\leqslant n$.

\be \label{CTX-Protocol-Pic}
\scalebox{0.8}{\raisebox{-7 cm}{
\begin{tikzpicture}
\begin{scope}[shift={(0,2)},xscale=1.2]
\node at (20,2) {$~$};
\node at (20,-9) {$~$};
\node at (21/3,-2/4) {Leader};
\draw[red,dashed] (23/3,-3/4)-- ++(0/3,-15/4);
\draw[red,dashed] (23/3,-7.5/4)--++(4/3,0);
\draw[red,dashed] (23/3,-11.5/4)--++(4/3,0);
\node at (22/3,1/4) {\textbf{Topological simulation}};
\draw (22/3,-5/4) -- (22/3,-4/4);
\draw (20/3,-7/4) -- (20/3,-4/4);
\draw (24/3,-5/4)--(21/3,-5/4) -- (21/3,-7/4) --(24/3,-7/4);
\node at (22/3,-6/4) {\size{$T_1$}};
\draw (22/3,-5/4-3/4) -- (22/3,-4/4-3/4);
\draw (20/3,-7/4-3/4) -- (20/3,-4/4-3/4);
\draw (24/3,-5/4-3/4)--(21/3,-5/4-3/4) -- (21/3,-7/4-3/4) --(24/3,-7/4-3/4);
\node at (22/3,-6/4-3/4) {\size{$T_2$}};
\node at (21/3,-11/4) {\size{$\vdots$}};
\draw (22/3,-5/4-8/4) -- (22/3,-4/4-6/4);
\draw (20/3,-7/4-8/4) -- (20/3,-4/4-6/4);
\draw (22/3,-5/4-11/4) -- (22/3,-4/4-11/4);
\draw (20/3,-5/4-11/4) -- (20/3,-4/4-11/4);
\draw (24/3,-5/4-8/4)--(21/3,-5/4-8/4) -- (21/3,-7/4-8/4) --(24/3,-7/4-8/4);
\node at (22/3,-6/4-8/4) {\size{$T_n$}};
\end{scope}
\node at (12,0) {$\overset{Isotopy}{=} $};
\begin{scope}[shift={(12,0)},xscale=1.2]
\node at (8/3,11/4) {Leader};
\node at (10/3,14/4) {\textbf{}};
\draw[red,dashed] (11/3,12/4)--++(4,0);
\draw[red,dashed] (11/3,12/4)-- ++(0/3,-36/4);
\draw[red,dashed] (15/3,12/4)-- ++(0/3,-24.5/4)--++(12/3,0);
\draw[red,dashed] (19/3,12/4)-- ++(0/3,-19.5/4)--++(8/3,0);
\fbraid{6/3}{0}{8/3}{2/4}
\fbraid{10/3}{4/4}{8/3}{2/4}
\draw (8/3,4/4) -- (8/3,5/4) to [bend left=45] (22/3,5/4)--(22/3,4/4);
\fqudit{18/3}{4/4}{-1/3}{1/4}{}{}
\fqudit{14/3}{4/4}{-1/3}{1/4}{}{}
\fqudit{10/3}{4/4}{-1/3}{1/4}{}{}
\fmeasure{8/3}{0/4}{-1/3}{1/4}{}
\draw (10/3,0/4) -- (10/3,2/4);
\node at (15/3,0/4) {$\cdots$};
\draw (22/3,-5/4+2/4) -- (22/3,4/4);
\draw (20/3,-7/4+2/4) -- (20/3,4/4);
\draw (24/3,-5/4+2/4)--(21/3,-5/4+2/4) -- (21/3,-7/4+2/4) --(24/3,-7/4+2/4);
\node at (22/3,-6/4+2/4) {\size{$T_1$}};
\fmeasure{20/3}{-7/4+2/4}{-1/3}{1/4}{}
\draw (-4/3+22/3,-5/4-3/4) -- (-4/3+22/3,4/4);
\draw (-4/3+20/3,-7/4-3/4) -- (-4/3+20/3,4/4);
\draw (24/3,-5/4-3/4)--(-4/3+21/3,-5/4-3/4) -- (-4/3+21/3,-7/4-3/4) --(24/3,-7/4-3/4);
\node at (-4/3+22/3,-6/4-3/4) {\size{$T_2$}};
\fmeasure{-4/3+20/3}{-7/4-3/4}{-1/3}{1/4}{}
\draw (-8/3+22/3,-5/4-8/4) -- (-8/3+22/3,4/4);
\draw (-8/3+20/3,-7/4-8/4) -- (-8/3+20/3,4/4);
\draw (24/3,-5/4-8/4)--(-8/3+21/3,-5/4-8/4) -- (-8/3+21/3,-7/4-8/4) --(24/3,-7/4-8/4);
\node at (-8/3+22/3,-6/4-8/4) {\size{$T_n$}};
\fmeasure{-8/3+20/3}{-7/4-8/4}{-1/3}{1/4}{}
\draw (4/3,10/4) -- (4/3,-21/4);
\draw (6/3,0) -- (6/3,-21/4);
\draw (6/3,10/4) -- (6/3,2/4);
\end{scope}

\node at (12,-10) {$\overset{}{=} \zeta^{-\ell_{0}^{2}}$};
\begin{scope}[shift={(12,-10)},xscale=1.2]
\node at (8/3,11/4) {Leader};
\node at (10/3,14/4) {\textbf{Diagrammatic protocol}};
\draw[red,dashed] (11/3,12/4)-- ++(4,0);
\draw[red,dashed] (11/3,12/4)-- ++(0/3,-36/4);
\draw[red,dashed] (15/3,12/4)-- ++(0/3,-24.5/4)--++(12/3,0);
\draw[red,dashed] (19/3,12/4)-- ++(0/3,-19.5/4)--++(8/3,0);
\fbraid{6/3}{0}{8/3}{2/4}
\fbraid{10/3}{4/4}{8/3}{2/4}
\draw (8/3,4/4) -- (8/3,5/4) to [bend left=45] (22/3,5/4)--(22/3,4/4);
\fqudit{18/3}{4/4}{-1/3}{1/4}{}{}
\fqudit{14/3}{4/4}{-1/3}{1/4}{}{}
\fqudit{10/3}{4/4}{-1/3}{1/4}{}{}
\fmeasure{8/3}{0/4}{-1/3}{1/4}{}
\node at (9/3,-1/4) {\size{$-\ell_0$}};
\draw (10/3,0/4) -- (10/3,2/4);
\node at (15/3,0/4) {$\cdots$};
\draw (22/3,-5/4+2/4) -- (22/3,4/4);
\draw (20/3,-7/4+2/4) -- (20/3,4/4);
\draw (24/3,-5/4+2/4)--(21/3,-5/4+2/4) -- (21/3,-7/4+2/4) --(24/3,-7/4+2/4);
\node at (22/3,-6/4+2/4) {\size{$T_1$}};
\fmeasure{20/3}{-7/4+2/4}{-1/3}{1/4}{-\ell_1}
\draw (-4/3+22/3,-5/4-3/4) -- (-4/3+22/3,4/4);
\draw (-4/3+20/3,-7/4-3/4) -- (-4/3+20/3,4/4);
\draw (24/3,-5/4-3/4)--(-4/3+21/3,-5/4-3/4) -- (-4/3+21/3,-7/4-3/4) --(24/3,-7/4-3/4);
\node at (-4/3+22/3,-6/4-3/4) {\size{$T_2$}};
\fmeasure{-4/3+20/3}{-7/4-3/4}{-1/3}{1/4}{-\ell_2}
\draw (-8/3+22/3,-5/4-8/4) -- (-8/3+22/3,4/4);
\draw (-8/3+20/3,-7/4-8/4) -- (-8/3+20/3,4/4);
\draw (24/3,-5/4-8/4)--(-8/3+21/3,-5/4-8/4) -- (-8/3+21/3,-7/4-8/4) --(24/3,-7/4-8/4);
\node at (-8/3+22/3,-6/4-8/4) {\size{$T_n$}};
\fmeasure{-8/3+20/3}{-7/4-8/4}{-1/3}{1/4}{-\ell_n}
\node at (5/3,-19/4) {\size{$\sum\limits_{i=0}^n \ell_i$}};
\draw (4/3,10/4) -- (4/3,-21/4);
\draw (6/3,0) -- (6/3,-21/4);
\draw (6/3,10/4) -- (6/3,2/4);
\end{scope}
\end{tikzpicture}

}}.
\ee

The non-local transformation $T_{j}$ cannot be implemented directly. We first apply topological isotopy, in a way that isolates each transformation $T_{j}$ in the region of the $j^{\rm th}$ party. (These regions are separated by the red dashed lines). Then each $T_{j}$ becomes local. We also move the intersections of the strings and the red dashed lines to the top, so that the non-locality only appears in the state, which turns out to be the resource state. 
This is how one obtains Picture 2 from Picture 1.

The cups in Picture 2 of  Equation~\eqref{CTX-Protocol-Pic} represent measurements. 
We add charges on cups to indicate the results of the measurements. Each resulting charge in a measurement must be balanced by an opposite charge. We  add that charge on the corresponding string in the region of the Leader.  This may also give  a  global phase  from applying the string Fourier relation~\eqref{Equ:SF1} for the Leader's charge.  These charged strings define measurement-based recovery maps given by Pauli $X$.   Thus we arrive at Picture 3, which is a diagrammatic protocol for MTC. It includes one multipartite resource state $\Max$ and LOCC.

We construct the diagrammatic MCT protocol using the 2-string language. From the above topological simulation, we observe that a natural resource state for multipartite communication is $\Max$, which we recognize from \eqref{Max-State}.   In fact, $\Max$ is unitary equivalent to the usual resource state $\GHZ$. The measurement-based recovery map arising from topological design is given by Pauli matrices.  This is a general phenomenon in various protocols for communication.

Using the two-string language dictionary in~\cite{JLW}, one can translate this diagrammatic protocol to the algebraic circuit given  in \eqref{Algebraic-MCTX}.

\bigskip
\begin{align}\label{Algebraic-MCTX}
\qquad
\scalebox{0.8}{\raisebox{3cm}{
\Qcircuit @C=1.5em @R=2em  {
&&&&\lstick{\text{Party 1}}&\qw&\qw& \multigate{1}{T_1}&\qw&\qw\\
\lstick{0}   & \dstick{ \Large{\vdots}} \qw                 & \multigate{3}{\FS} & \qw & \qw & \dstick{ \Large{\vdots}} \qw & \gate{Z^{-1}} \cwx[2] & \ghost{T_1}  & \meter \cwx[2]\\
&&&&\lstick{\text{Party n}}&\qw&\qw&\multigate{1}{T_n}&\qw&\qw\\
\lstick{0}  &  \qw &  \ghost{\FS}       & \qw & \qw & \qw & \gate{Z^{-1}} \cwx[1] & \ghost{T_n}  & \meter \cwx[1]\\
\lstick{0} & \qw                     & \ghost{\FS}  & \control \qw \qwx[1] & \gate{F^{-1}} & \control \qw \qwx[1]   & \meter \cwx[-1] & \cw & \control \cw \cwx[1] \\
\lstick{\text{Leader}}& \qw & \qw  & \gate{X} & \qw & \gate{X^{-1}} & \qw & \qw & \gate{X} & \qw\\
}}}
&=\qquad\quad\quad~~\scalebox{0.8}{\raisebox{1cm}{
\Qcircuit @C=1.5em @R=2em  {
\lstick{\text{Party 1}}& \multigate{2}{T_1}&\qw &\qw &\qw\\
\lstick{\text{Party n}}  & & \ustick{ \Large{\vdots}}    &\multigate{1}{T_n}&\qw \\
\lstick{\text{Leader}}& \ghost{T_1} & \qw  & \ghost{T_n} & \qw
}}}\\ \nonumber
&
\end{align}
Here one simplifies the protocol by the identity in Figure~\ref{Trick}.
\begin{figure}[h]
\scalebox{0.8}{
\Qcircuit @C=1.5em @R=2em  {
& \control \qw \qwx[1]  & \meter \cwx[1] \\
& \gate{X^{-1}}         & \gate{X} & \qw\\
}}
\raisebox{-.5cm}{=}
\scalebox{0.8}{
\Qcircuit @C=1.5em @R=2em  {
&  \qw  & \meter  \\
& \qw   & \qw & \qw\\
}}
  \caption{\label{Trick}}
\end{figure}

We represent $\Max$ as $\FS |\vec {0}\rangle$ in Equation \eqref{Algebraic-MCTX}, where
the state $|\vec{0}\rangle=\ket{0,0,\ldots,0}$ denotes the $n$-qudit with charge $0$ for each $1$-qudit, and we call $|\vec 0\rangle$ the ground state. 
We mention the extremely interesting transformation $\FS$ that appears here, and that we call the {\it string Fourier transform}.  
It is a mechanism to produce the $n$-qudit resource state $\Max$ from the ground state $|\vec {0}\rangle$.
We explore $\FS$  extensively in \cite{JL}.

Taking the conjugation of
local transformations, we obtain the MCT protocol for other types of compressed transformations. In particular, taking the conjugate of
the Fourier transform $F$, we obtain the MCT protocol for $Z$-compressed transformations or controlled transformations in \eqref{MCT-Protocol} .

In the case with only two persons, the MCT protocol says:
Assume that a quantum network can perform a transformation $T$, which is compressed on a 1-qudit belonging to a network member Alice. Then Alice can teleport her 1-qudit transformation to Bob using one edit and two cdits.
One can easily derive the entanglement-swapping protocol, and the teleportation of the Tofolli gate from it.

\section{Conclusion}
In this paper we extend our  two-string model for quantum information.  
\begin{enumerate}
\item{} We articulate the concept of constructive simulation and topological design.   

\item{} We  introduce  topological compression and define compressed transformations.

\item{} We define a protocol to teleport compressed transformations. 

\item{} Our new protocol costs only one multipartite resource state to implement multiple, non-local transformations between multiple parties. For more than two parties, our multipartite teleportation protocol does not reduce to compositions of bipartite communications.
\end{enumerate}

\section{Acknowledgements}
\noindent 
We are grateful to Jacob~Biamonte, Bob~Coecke, Michael Freedman,  Amar~Hadzihasanovic, David~Reutter, and Kevin Walker for interesting discussions during an October 2016 workshop at Harvard University, and Jamie Vicary for correspondence.
We thank  our hosts for hospitality at the Research Institute for Mathematics of the ETH-Zurich,   at the Max Planck Institute for Mathematics, and at the Hausdorff Institute for Mathematics in Bonn, where part of this work was carried out.  This research was supported in part by a grant from the Templeton Religion Trust.

\end{document}